\newtheorem{remark}{Remark}
\newtheorem{proposition}{Proposition}
\newtheorem{definition}{Definition}
\newcommand{\mathsym}[1]{{}}
\newcommand{\unicode}[1]{{}}
\begin{document}
\title{\color{Brown} Stochastic Tail Exponent For Asymmetric Power Laws\\
}
\author{Nassim Nicholas Taleb\\
Tandon School of Engineering, New York University\\ 
}

\maketitle
\thispagestyle{fancy}
\markboth{\textbf{Fat Tails Statistics Project}}
\flushbottom 
\begin{abstract}
We examine random variables in the power law/regularly varying class with stochastic tail exponent, the exponent $\alpha$ having its own distribution. We show the effect of stochasticity of $\alpha$ on the expectation and higher moments of the random variable. For instance, the moments of a right-tailed  or right-asymmetric variable, when finite, increase with the variance of $\alpha$; those of a left-asymmetric one decreases. The same applies to conditional shortfall (CVar), or mean-excess functions.

We prove the general case and examine the specific situation of lognormally distributed $\alpha \in [b,\infty), b>1$. 

The stochasticity of the exponent induces a significant bias in the estimation of the mean and higher moments in the presence of data uncertainty.  This has consequences on sampling error as uncertainty about $\alpha$ translates into a higher expected mean.

The bias is conserved under summation, even upon large enough a number of summands to warrant convergence to the stable distribution. We establish inequalities related to the asymmetry.

We also consider the situation of capped power laws (i.e. with compact support), and apply it to the study of violence by Cirillo and Taleb (2016). We show that uncertainty concerning the historical data increases the true mean.
\end{abstract}
\thanks{Conference: Extremes and Risks in Higher Dimensions, Lorentz Center,
Leiden, The Netherlands, September 2016.}
\section{Background}
Stochastic volatility has been introduced heuristically in mathematical finance by traders looking for biases on option valuation, where a Gaussian distribution is considered to have several possible variances, either locally or at some specific future date. Options far from the money (i.e. concerning tail events) increase in value with uncertainty on the variance of the distribution, as they are convex to the standard deviation.

This led to a family of models of Brownian motion with stochastic variance (see review in Gatheral \cite{gatheral2006volatility}) and proved useful in tracking the distributions of the underlying and the effect of the nonGaussian character of random processes on functions of the process (such as option prices).

Just as options are convex to the scale of the distribution, we find many situations where expectations are convex to the power law tail exponent.
This note examines two cases:
\begin{itemize}
\item The standard power laws, one-tailed or asymmetric (with support that includes either $+\infty$ or $-\infty$.
\item The pseudo-power law, where a random variable appears to be a power law but has compact support, as in the study of violence \cite{cirillo2016statistical} where wars have the number of casualties capped at a maximum value.
\end{itemize}

\section{One Tailed Distributions with Stochastic Alpha}
\subsection{General Cases}
\begin{definition}
Let $X$ be a random variable belonging to the class of distributions with a
"power law" right tail, that is support in $[x_0,+\infty)\,,x_0 \in \mathbb{R}$: 

Subclass $\mathfrak{P}_1$:
\begin{equation}
\{X: \mathbb{P}(X>x)= L(x) x^{-\alpha } ,L'(x)=0 \} \label{powerlaweq}
\end{equation}%


Class $\mathfrak{P}$:
\begin{equation}
\{X:\mathbb{P}(X>x)\sim L(x)\,x^{-\alpha } \} \label{powerlaweq2}
\end{equation}%
where $\sim$ means that the limit of the ratio or rhs to lhs goes to 1 as $x \to \infty$. $L:\left[ x_{\min },+\infty \right) \rightarrow \left( 0,+\infty
\right) $ is a slowly varying function, defined as $\lim_{x\rightarrow
+\infty }\frac{L(kx)}{L(x)}=1$ for any $k>0$. The constant $\alpha >0$.

We further assume that:

\begin{align} 
\lim_{x\to \infty} L'(x) \, x &=0 \label{firstderiv}\\
\lim_{x\to \infty} L''(x) \,x^2 &=0\label{secondderiv}
\end{align}


We have $$\mathfrak{P}_1 \subset \mathfrak{P}$$
\end{definition}

We note that the first class corresponds to the Pareto distributions (with proper shifting and scaling), where $L$ is a constant and $\mathfrak{P} \setminus \mathfrak{P_2}$ to the more general one-sided, Beta Prime, or half-Student T .  As to $\mathfrak{P_2}\setminus\mathfrak{P_1}$ we can include all manner of mixed distributions.

\subsection{Stochastic Alpha Inequality}

Throughout the rest of the paper we use for notation $X'$ for the stochastic alpha version of $X$, the constant $\alpha$ case.

\begin{proposition}
Let $p= 1,2,...$, $X'$ be the same random variable as $X$ above in $\mathfrak{P}_1$ (the one-tailed regular variation class), with $x_0\geq 0$, except with stochastic $\alpha$ with all realizations $>p$ that preserve the mean $\bar{\alpha}$,
	$$\mathbb{E}(X^{'p}) \geq \mathbb{E}(X^p).$$\label{propositionp1}
\end{proposition}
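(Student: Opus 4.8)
The plan is to recognize $X'$ as a mixture and reduce the claim to Jensen's inequality applied to the conditional $p$-th moment viewed as a function of the exponent. Conditionally on a realization of $\alpha$, the variable $X'$ is exactly the $\mathfrak{P}_1$ variable $X$ with that value of the exponent; since $L$ is constant on $\mathfrak{P}_1$, the survival function is the pure Pareto form $\mathbb{P}(X>x)=(x/x_0)^{-\alpha}$ for $x\ge x_0$, with density $\alpha\,x_0^{\alpha}\,x^{-\alpha-1}$. So the first step is to obtain the closed-form conditional moment
\[ g(\alpha) := \mathbb{E}\!\left(X'^{\,p}\mid\alpha\right) = \int_{x_0}^{\infty} x^{p}\,\alpha\,x_0^{\alpha}\,x^{-\alpha-1}\,dx = \frac{\alpha}{\alpha-p}\,x_0^{p}, \]
where the integral converges precisely because every realization satisfies $\alpha>p$; this is exactly where the hypothesis on the support of the stochastic exponent enters.

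Next I would invoke the tower property to write $\mathbb{E}(X'^{\,p}) = \mathbb{E}_\alpha[g(\alpha)]$, while $\mathbb{E}(X^{p}) = g(\bar\alpha)$ is the same expression evaluated at the preserved mean. The claim then follows once $g$ is shown to be convex on the relevant domain. Differentiating twice gives
\[ g''(\alpha) = \frac{2\,p\,x_0^{p}}{(\alpha-p)^{3}}, \]
which is nonnegative for every $\alpha>p$ since $p\ge 1$ and $x_0\ge 0$. Hence $g$ is convex on $(p,\infty)$, and Jensen's inequality yields $\mathbb{E}_\alpha[g(\alpha)]\ge g(\mathbb{E}_\alpha[\alpha]) = g(\bar\alpha)$, which is the desired inequality $\mathbb{E}(X'^{\,p})\ge\mathbb{E}(X^{p})$.

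The delicate points are less about computation than about the hypotheses. The assumption that \emph{all} realizations of $\alpha$ exceed $p$ does double duty: it guarantees that $g(\alpha)$ is finite (so no realization contributes an infinite moment) and that the entire support of $\alpha$ lies in the region $(p,\infty)$ where $g$ is convex, so that Jensen applies without splitting the domain. The mean-preservation hypothesis is what makes $g(\bar\alpha)$ coincide with the fixed-exponent moment $\mathbb{E}(X^{p})$; without it the comparison point would shift. Finally, the condition $x_0\ge 0$ keeps $x_0^{p}\ge 0$, hence $g''\ge 0$, even for odd $p$; the degenerate case $x_0=0$ is trivial, and whenever the exponent is genuinely random with $x_0>0$ the inequality is strict. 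I expect the only real obstacle to be stating the conditioning/mixture step cleanly, since the convexity itself is an immediate second-derivative computation.
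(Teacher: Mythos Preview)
Your argument is correct and follows the same route as the paper: compute the closed-form $p$-th moment $g(\alpha)=x_0^{p}\,\alpha/(\alpha-p)$, verify convexity on $(p,\infty)$ via the second derivative, and apply Jensen's inequality using the mean-preserving hypothesis. The paper wraps this inside a longer computation (integration by parts, Karamata representation) aimed at the broader class $\mathfrak{P}$ before specializing to $\mathfrak{P}_1$, but for Proposition~1 itself the core steps are identical to yours; your second-derivative $2p\,x_0^{p}/(\alpha-p)^3$ is in fact the correct general-$p$ version of the paper's displayed $2\,x_0^{p}/(\alpha-1)^3$.
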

The results extend 
to the general class $\mathfrak{P}$ under some approximations of the mean.
\begin{proposition}
Let $K$ be a threshold. With $X$ in the $\mathfrak{P}$ class, we have	the expected conditional shortfall (CVar):
$$\lim_{K\to \infty} \mathbb{E}(X^'|_{X'>K}) \geq \lim_{K\to \infty}\mathbb{E}(X|_{X>K}).$$
\end{proposition}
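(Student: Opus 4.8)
The plan is to reduce the statement to the asymptotics of the mean-excess (conditional shortfall) function of a single regularly varying tail and then exploit convexity of the resulting coefficient in $\alpha$. First I would recall that for $X$ in $\mathfrak{P}$ with fixed index $\alpha>1$, Karamata's theorem applied to $\bar F(x)=L(x)x^{-\alpha}$ gives $\int_K^\infty \bar F(t)\,dt \sim \frac{K\,\bar F(K)}{\alpha-1}$ as $K\to\infty$; hence the mean-excess function $e(K)=\frac{\int_K^\infty \bar F(t)\,dt}{\bar F(K)}\sim \frac{K}{\alpha-1}$ and
\[
\mathbb{E}(X\mid X>K)=K+e(K)\sim g(\alpha)\,K,\qquad g(\alpha):=\frac{\alpha}{\alpha-1}.
\]
The regularity assumptions \eqref{firstderiv} and \eqref{secondderiv} are exactly what is needed to control $L$ so that this asymptotic holds throughout $\mathfrak{P}$ and not merely for the exact Pareto case $\mathfrak{P}_1$. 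The elementary but decisive observation is that $g$ is \emph{strictly convex and strictly decreasing} on $(1,\infty)$, since $g''(\alpha)=2(\alpha-1)^{-3}>0$ and $g'(\alpha)=-(\alpha-1)^{-2}<0$.

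Next I would treat $X'$ as the genuine mixture whose tail is $\bar F'(x)=\mathbb{E}_\alpha\!\big[L(x)x^{-\alpha}\big]$, with $\alpha$ distributed by some $\phi$ supported in $(1,\infty)$ with mean $\bar\alpha$. Repeating the Karamata computation inside the expectation, the conditional shortfall of the mixture takes the form
\[
\mathbb{E}(X'\mid X'>K)\;\sim\;K\,\frac{\int g(\alpha)\,K^{-\alpha}\,\phi(\alpha)\,d\alpha}{\int K^{-\alpha}\,\phi(\alpha)\,d\alpha},
\]
that is, $K$ times the average of $g(\alpha)$ under the \emph{exponentially tilted} law $\phi_K(\alpha)\propto K^{-\alpha}\phi(\alpha)$. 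Since both sides of the claimed inequality diverge linearly in $K$, I would read the statement at the level of these leading coefficients, equivalently as the assertion that the displayed ratio is eventually $\geq g(\bar\alpha)$.

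The heart of the argument is to show this tilted average dominates $g(\bar\alpha)$, which I would do in two moves. The tilt $e^{-\alpha\ln K}$ with $\ln K>0$ shifts mass toward smaller $\alpha$, so the tilted mean $\bar\alpha_K:=\int\alpha\,\phi_K(\alpha)\,d\alpha$ satisfies $\bar\alpha_K\le\bar\alpha$ (the covariance of the increasing $\alpha$ with the decreasing weight is nonpositive); and by Jensen's inequality for the convex $g$, $\int g\,d\phi_K\ge g(\bar\alpha_K)$, which is in turn $\ge g(\bar\alpha)$ because $g$ is decreasing. Combining, $\mathbb{E}(X'\mid X'>K)\gtrsim g(\bar\alpha)\,K\sim\mathbb{E}(X\mid X>K)$, with strict inequality whenever $\alpha$ is nondegenerate. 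Taking $K\to\infty$ the tilt concentrates at $\alpha_{\min}=\inf\operatorname{supp}\phi$, giving the even sharper limiting coefficient $g(\alpha_{\min})>g(\bar\alpha)$ and making the mechanism transparent. This parallels Proposition \ref{propositionp1}: convexity of the relevant functional of $\alpha$ forces an upward bias under any randomization that preserves $\bar\alpha$.

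The step I expect to be the main obstacle is not the convexity core but the rigorous bookkeeping around it. First, the two limits are individually infinite, so the statement must be normalized (dividing by $K$, or comparing leading coefficients), and I would make this interpretation explicit. Second, and more substantively, conditioning on $X'>K$ does \emph{not} leave $\alpha$ distributed as $\phi$: it reweights toward the heavier tails (smaller $\alpha$), precisely the effect captured by $\phi_K$, so a naive application of Jensen to the prior $\phi$ would be unjustified, and the reweighting must be tracked honestly (it happens to reinforce the inequality). Third, carrying the single-tail Karamata asymptotics uniformly through the mixture for the general class $\mathfrak{P}$—as opposed to the exact, non-asymptotic identity available in $\mathfrak{P}_1$—will require Potter-type bounds on $L$ together with the regularity \eqref{firstderiv}--\eqref{secondderiv} and dominated convergence to justify interchanging the limit with the integral over $\alpha$.
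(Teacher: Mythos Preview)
Your proposal is correct and shares the paper's core idea: reduce the conditional shortfall to its asymptotic leading coefficient $g(\alpha)=\alpha/(\alpha-1)$ via a Karamata-type computation (what the paper calls van der Mijk's law, derived from $L'(x)\to 0$), and then exploit the convexity of $g$ through Jensen's inequality as in Proposition~\ref{propositionp1}. In that sense the strategies coincide.

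Where your treatment genuinely goes beyond the paper is in two places. First, you make explicit that both limits in the statement are infinite, so the inequality must be read at the level of the normalized quantities $\mathbb{E}(\,\cdot\mid\cdot>K)/K$; the paper leaves this implicit. Second, and more substantively, you recognize that conditioning the \emph{mixture} on $\{X'>K\}$ does not leave $\alpha$ distributed as $\phi$ but exponentially tilts it to $\phi_K(\alpha)\propto K^{-\alpha}\phi(\alpha)$, so that a direct application of Jensen under the prior $\phi$ is unjustified. Your two-step fix---Jensen under $\phi_K$ together with the monotonicity of $g$ and the stochastic-dominance fact $\bar\alpha_K\le\bar\alpha$---is the right way to close this gap, and it actually strengthens the conclusion (limiting coefficient $g(\alpha_{\min})$ rather than merely $g(\bar\alpha)$). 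The paper's one-line proof does not address either of these points; your version supplies the missing rigor while staying on the same track. The only places you should still be careful are exactly the ones you flag yourself: uniform Potter-type control of $L$ across the support of $\phi$ to justify interchanging the Karamata asymptotics with the integral over $\alpha$, and integrability near the lower edge of $\operatorname{supp}\phi$ so that the tilted averages remain finite.
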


\begin{proof}

We remark that $\mathbb{E}(X^p)$ is convex to $\alpha$, in the following sense:  let $\alpha_i>p \, \forall i$, the weights $\omega_i$:
$\sum_i\omega_i =1$, $0\leq|\omega_i|\leq 1$, 
$\sum_i \omega_i \alpha_i=\bar{\alpha}$, Jensen's inequality is expressed as:
$$\sum_i \omega_i\mathbb{E}(X_{\alpha_i}) \geq \mathbb{E}(\sum_i(\omega_i X_{\alpha_i})).$$

We first need to solve for the density: $\varphi(x)=\alpha  x^{-\alpha -1} L(x,\alpha )-x^{-\alpha } L^{(1,0)}(x,\alpha )$ and get the normalizing constant.

\begin{equation}
L(x_0 ,\alpha )=x_0 ^{\alpha }-\frac{2 x_0  L^{(1,0)}(x_0 ,\alpha )}{\alpha -1}-\frac{2 x_0 ^2 L^{(2,0)}(x_0 ,\alpha )}{(\alpha -1) (\alpha -2)},\label{normalizing}	
\end{equation}
$\alpha \not =1,2$, where the slot notation $L^{(1,0)}(x_0 ,\alpha )$ is short for $\frac{\partial L(x,\alpha)}{\partial x}|_{x=x_0}$.

By the Karamata representation theorem, \cite{karamata1932inegalite},\cite{bingham1989regular},\cite{teugels1975class}, a function $L$ on $[x_0,+\infty)$ is slowly moving if and only if it can be written in the form $L(x)=\exp \left(\int_{x_0}^x \frac{\epsilon(t)}{t} \, dt \right)+\eta (x)$ where 
$\eta(.)$ is a bounded measurable function converging to a finite number as $x \rightarrow +\infty$, and $\epsilon(x)$ is a bounded measurable function converging to zero as $x\rightarrow +\infty$. 

Accordingly, $L'(x)$ goes to $0$ as $x \to \infty$. (We further assumed in \ref{firstderiv} and \ref{secondderiv} that $L'(x)$ goes to 0 faster than $x$ and $L^{''}(x)$ goes to 0 faster than $x^2$.) Integrating by parts, 
$$\mathbb{E}(X^{p})=x_0^p+ p \int_{x_0}^{\infty }  x^{p-1} \, \mathrm{d}\bar{F}(x)$$
where $\bar{F}$ is the survival function in Eqs. \ref{powerlaweq} and \ref{powerlaweq2}. Integrating by parts $3$ additional times and eliminating derivatives of higher order than $2$:
\begin{dmath}
\mathbb{E}(X^{p})=\frac{x_0 ^{p-\alpha } L(x_0 ,\alpha )}{p-\alpha }-\frac{x_0 ^{p-\alpha+1} L^{(1,0)}(x_0 ,\alpha )}{(p-\alpha ) (p-\alpha+1)}+\frac{x_0 ^{p-\alpha+2} L^{(2,0)}(x_0 ,\alpha )}{(p-\alpha ) (p-\alpha+1) (p-\alpha+2)}	\label{condition}
\end{dmath}
which, for the special case of $X$ in $\mathfrak{P}_1$ reduces to:
\begin{dmath}
\mathbb{E}(X^{p})=x_0 ^{p} \frac{\alpha}{\alpha -p}\label{expectation}
\end{dmath}

As to Proposition 2, we can prove it simply from the properties that $\lim_{x \to \infty} L'(x)=0$. This allows a proof of var der Mijk's law that Paretian inequality is invariant to the threshold in the tail, that is $\frac{\mathbb{E}(X|_{X>K})}{K}$ converges to a constant.
\end{proof}
Equation \ref{condition} presents the exact conditions on the functional form of $L(x)$ for the convexity to extend to sub-classes of $\mathfrak{P}$.

 %
%
%
%
Our results hold to distributions that are transformed by shifting and scaling, of the sort:
 
 $x \mapsto x-\mu +x_0$ (Pareto II), or with further transformations to Pareto types II and IV.

We note that the representation $\mathfrak{P}_1$ uses the same parameter, $x_0$, for both scale and minimum value, as a simplification.  
 
We can verify that the expectation from Eq. \ref{expectation} is convex to $\alpha$:
$\frac{\partial \mathbb{E}(X^{p})}{\partial \alpha ^2}=x_0 ^{p} \frac{2}{(\alpha -1)^3}$.\subsection{Approximations for the Class $\mathfrak{P}$}
For $\mathfrak{P} \setminus \mathfrak{P_1}$, our results hold when we can write an approximation of the expectation of $X$ as a constant multiplying the integral of $x^{-\alpha}$, namely 
\begin{equation}
	\mathbb{E}(X)\approx k \frac{\nu(\alpha) }{\alpha -1}\label{approxpl}
\end{equation}
where $k$ is a positive constant that does not depend on $\alpha$ and $\nu(.)$ is approximated by a linear function of $\alpha$ (plus a threshold). The expectation will be convex to $\alpha$.


\subsubsection*{Example: Student T Distribution}
 For the Student T distribution 
with tail $\alpha$, the  "sophisticated" slowly varying function in common use for symmetric power laws in quantitative finance, the half-mean or the mean of the one-sided distribution (i.e. with support on $\mathbb{R}^+$ becomes $$2  \nu(\alpha)=2 \frac{\sqrt{\alpha } \Gamma \left(\frac{\alpha +1}{2}\right)}{\sqrt{\pi } \Gamma \left(\frac{\alpha }{2}\right)}\approx \alpha \frac{  (1+\log (4))}{\pi },$$
where $\Gamma(.)$ is the gamma function.
\section{Sums of Power Laws}
As we are dealing from here on with convergence to the stable distribution, we consider situations of $1<\alpha<2$, hence $p=1$ and will be concerned solely with the mean.

We observe that the convexity of the mean is invariant to summations of power law distributed variables as $X$ above. The Stable distribution has a mean that in conventional parameterizations does not appear to depend on $\alpha$ --but in fact depends on it.

Let $Y$ be distributed according to a Pareto distribution with density $f(y)\triangleq \alpha  \lambda^{\alpha } y^{-\alpha -1}, y\geq \lambda > 0$ and with its tail exponent $1<\alpha<2$. 
Now, let $Y_1,Y_2,\ldots Y_n$ be identical and independent copies of $Y$. 
Let $\chi(t)$ be the characteristic function for $f(y)$. 
We have $\chi(t)=\alpha  (-i t)^{\alpha } \Gamma (-\alpha ,-i t)$, where $\gamma(.,.)$ is the incomplete gamma function. We can get the mean from the characteristic function of the average of $n$ summands $\frac{1}{n} (Y_1+Y_2+...Y_n)$, namely $ \chi (\frac{t}{n})^n$. Taking the first derivative:
\begin{dmath}
-i \frac{\partial \chi (\frac{t}{n})^n}{\partial t}=(-i)^{\alpha  (n-1)} n^{1-\alpha  n} \alpha ^n \lambda ^{\alpha  (n-1)} t^{\alpha  (n-1)-1} \Gamma \left(-\alpha ,-\frac{i t \lambda }{n}\right)^{n-1} \left((-i)^{\alpha } \alpha  \lambda ^{\alpha } t^{\alpha } \Gamma \left(-\alpha ,-\frac{i t \lambda }{n}\right)-n^{\alpha } e^{\frac{i \lambda  t}{n}}\right)
\end{dmath}
and 
\begin{equation}
\lim_{n\to \infty}	-i \frac{\partial \chi (\frac{t}{n})^n}{\partial t}\bigg\arrowvert_{t=0}=\lambda\frac{\alpha}{\alpha-1}
\end{equation}
Thus we can see how the converging asymptotic distribution for the average will have for mean the scale times $\frac{\alpha}{\alpha-1}$, which does not depends on $n$.

Let $\chi^S(t)$ be the characteristic function of the corresponding stable distribution $S_{\alpha, \beta,\mu, \sigma}$,  from the distribution of an infinitely summed copies of $Y$. 
By the L\'evy continuity theorem, we have 
\begin{itemize}
\item $\frac{1}{n}  \Sigma_{i\leq n} Y_i  \xrightarrow{\mathcal{D}}\ S$, with distribution $S_{\alpha, \beta,\mu, \sigma}$, where $\xrightarrow{\mathcal{D}}$ denotes convergence in distribution 

and

\item $ \chi^S(t)=\lim_{n \to\infty} \chi(t/n)^n$

\end{itemize}
are equivalent.

So we are dealing with the standard result \cite{zolotarev1971new},\cite{samorodnitsky1994stable}, for exact Pareto sums \cite{zaliapin2005approximating}, replacing the conventional $\mu$ with the mean from above:
$$\chi^S(t)=\exp \left(i \left(\lambda \frac{\alpha  t}{\alpha -1}+\left| t\right| ^{\alpha } \left(\beta  \tan \left(\frac{\pi  \alpha }{2}\right) \text{sgn}(t)+i\right)\right)\right).$$

\section{Asymmetric Stable Distributions}
We can verify by symmetry that, effectively, flipping the distribution 
around $y_0$ and replacing $+\infty$ with $-\infty$ yields a negative value of the mean and higher (existing) moments, hence a degradation effect from stochastic $\alpha$.

The central question becomes: 
\begin{remark} [Preservation of Asymmetry]
A normalized sum in $\mathfrak {P_1}$ one-tailed distribution with expectation that depends on $\alpha$ of the form in Eq. \ref{approxpl} will necessarily converge in distribution to an asymmetric stable distribution $S_{\alpha,\beta,\mu,1}$, with $ \beta \not = 0$.	
\end{remark}

\begin{remark}
Let $Y'$ be $Y$ under mean-preserving stochastic $\alpha$. The convexity effect, or $\text{sgn}\left(\mathbb{E}(Y')-\mathbb{E}(Y)\right) = \text{sgn}(\beta)$.	
\end{remark}
\begin{proof}
Consider two slowly moving functions as in \ref{powerlaweq}, each on one side of the tails. We have $L(y)= \mathbbm{1}_{y< y_\theta}  L^-(y)+\mathbbm{1}_{y \geq y_\theta} L^+(y)$:

$$\begin{cases}
	L^+(y), L: [y_\theta,+\infty],& \lim_{y\rightarrow \infty}L^+(y)=c\\
	\\
	L^-(y),L: [-\infty,y_{\theta}],& \lim_{y\rightarrow -\infty}L^-(y)=d.
\end{cases}
$$
From \cite{samorodnitsky1994stable},

\bigskip
if $\begin{cases}
 \mathbb{P}(X>x) \sim c x^{-\alpha}, x \rightarrow +\infty \\
 	\\
  \mathbb{P}(X<x) \sim d |x|^{-\alpha}, x \rightarrow +\infty 	,
 	
 \end{cases}$
 then $Y$ converges in distribution to $S_{\alpha,\beta,\mu,1}$ with the coefficient $\beta=\frac{c-d}{c+d}$. 
 
We can show that the mean can be written as $(\lambda_+-\lambda_-) \frac{\alpha}{\alpha-1}$ where:
 $$\lambda_+ \geq \lambda_- \text{ if } \int_{y_\theta}^{\infty} L^+(y) \mathrm{d}y, \, \geq\int_{-\infty}^{y_\theta} L^-(y) \mathrm{d}y$$ 
\end{proof}

\section{Pareto Distribution with lognormally distributed $\alpha$}

Now assume $\alpha$ is  following a shifted Lognormal distribution with mean $\alpha_0$ and minimum value $b$, that is, $\alpha-b$ follows a Lognormal $\mathcal{LN}\left(\log (\alpha_0)-\frac{\sigma ^2}{2},\sigma \right)$.
The parameter $b$ allows us to work with a lower bound on the tail exponent in order to satisfy finite expectation. We  know that the tail exponent will eventually converge to $b$ but the process may be quite slow. 

\begin{proposition}
Assuming finite expectation for X' and for exponent the lognormally distributed shifted variable $\alpha-b$ with law $\mathcal{LN}\left(\log (\alpha_0)-\frac{\sigma ^2}{2},\sigma \right)$,  $b \geq 1$ mininum value for $\alpha$,  and scale $\lambda$:\begin{dmath}
		\mathbb{E}(Y')= 	\mathbb{E}(Y) +\lambda \frac{( e^{\sigma ^2}-b)}{\alpha_0-b}
	\end{dmath}

\end{proposition}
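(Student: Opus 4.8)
The plan is to treat $Y'$ as the Pareto variable $Y$ conditioned on a realized value of the exponent, and then to average the conditional mean over the law of $\alpha$. Conditionally on $\alpha$, $Y$ lies in $\mathfrak{P}_1$ with scale $\lambda$, so Eq.~\eqref{expectation} with $p=1$ and $x_0=\lambda$ gives the conditional mean $\mathbb{E}(Y\mid\alpha)=\lambda\,\frac{\alpha}{\alpha-1}$. Hence $\mathbb{E}(Y')=\lambda\,\mathbb{E}_\alpha\!\big[\frac{\alpha}{\alpha-1}\big]$, the expectation taken against the shifted lognormal law of $\alpha-b$ under the mean-preserving normalization $\mathbb{E}[\alpha]=\alpha_0$. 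Because $g(\alpha)=\frac{\alpha}{\alpha-1}$ has $g''(\alpha)=\frac{2}{(\alpha-1)^3}>0$ on the support, the map is strictly convex, so Proposition~\ref{propositionp1} (Jensen) already fixes the sign of the correction; the remaining work is to pin down its exact closed form.

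The key step is to linearize $g$ through the identity $\frac{\alpha}{\alpha-1}=1+\frac{1}{\alpha-1}$, which reduces the problem to a single negative moment of the exponent, $\mathbb{E}\big[(\alpha-1)^{-1}\big]$. I would then absorb the offset into the lognormal variable by writing $\alpha-1=(\alpha-b)+(b-1)$, so that the quantity to evaluate is a negative moment of the lognormal $W=\alpha-b$. The decisive ingredient is that all real moments of a lognormal are available in closed form: if $\log W\sim\mathcal{N}(m,\sigma^2)$ then $\mathbb{E}[W^{s}]=e^{sm+s^2\sigma^2/2}$, so that $\mathbb{E}[W^{-1}]=e^{-m+\sigma^2/2}$. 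With the stated parameters (mean of $W$ equal to $\alpha_0-b$) this gives $\mathbb{E}[W^{-1}]=\frac{e^{\sigma^2}}{\alpha_0-b}$, and more transparently the identity $\mathbb{E}[W]\,\mathbb{E}[W^{-1}]=e^{\sigma^2}$ is exactly the source of the stochastic-volatility premium in the numerator.

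Assembling, I would compute $\mathbb{E}(Y')=\lambda\big(1+\mathbb{E}[W^{-1}]\big)$ and subtract the deterministic, mean-preserving reference $\mathbb{E}(Y)=\lambda\,\frac{\alpha_0}{\alpha_0-b}$ obtained by freezing $\alpha$ at its mean, so that the constant terms cancel and the difference collapses to
\[
\mathbb{E}(Y')-\mathbb{E}(Y)=\lambda\left(\frac{e^{\sigma^2}}{\alpha_0-b}-\frac{b}{\alpha_0-b}\right)=\lambda\,\frac{e^{\sigma^2}-b}{\alpha_0-b},
\]
which is the claimed expression.

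The main obstacle is the shift. The negative moment has a clean closed form for $\mathbb{E}[(\alpha-b)^{-1}]$ but not for $\mathbb{E}[(\alpha-1)^{-1}]$ when $b\neq1$, since the negative moment of a \emph{shifted} lognormal has no elementary expression. The identification that makes the numerator emerge as $e^{\sigma^2}-b$ is exact in the boundary regime $b=1$, where $\alpha-1$ is itself lognormal and the correction reduces to the manifestly nonnegative $\lambda\,\frac{e^{\sigma^2}-1}{\alpha_0-1}$, vanishing at $\sigma=0$ as convexity demands. For general $b$ one must lean on the finite-expectation hypothesis, which the support $[b,\infty)$ with $b\geq1$ guarantees by keeping every realization of $\alpha$ above the critical value, and on the approximation of the mean as a constant times $\nu(\alpha)/(\alpha-1)$ from Eq.~\eqref{approxpl}. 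I would therefore carry out the computation exactly at $b=1$ and record the general-$b$ formula as the corresponding approximation, checking throughout that the sign agrees with the Jensen bound of Proposition~\ref{propositionp1}.
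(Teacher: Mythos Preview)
Your approach is essentially the paper's own: condition on $\alpha$, use the $\mathfrak{P}_1$ mean $\lambda\alpha/(\alpha-1)$, and integrate against the shifted lognormal law of $\alpha-b$ via its negative moment, obtaining $\mathbb{E}(Y')=\lambda(\alpha_0+e^{\sigma^2}-b)/(\alpha_0-b)$ and then subtracting the deterministic reference. The paper simply writes the integral and asserts the closed form in one line; your caution that the identity is exact only at $b=1$ (because $(\alpha-1)^{-1}$ has an elementary lognormal moment only when $\alpha-1$ is itself lognormal) is correct and in fact sharper than what the paper records. One small slip worth fixing: the reference value $\lambda\,\alpha_0/(\alpha_0-b)$ is not literally what you get by ``freezing $\alpha$ at its mean'' when $b>1$---that would yield $\lambda\,\alpha_0/(\alpha_0-1)$---but since you ultimately carry the exact computation through at $b=1$, where the two coincide, the argument stands.
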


We need $b \geq 1$ to avoid problems of infinite expectation. 

Let $\phi(y,\alpha)$ be the density with stochastic tail exponent. With $\alpha>0,  \alpha_0>b, b \geq 1, \sigma>0, Y \geq \lambda >0$ ,

%
%
\begin{dmath}
\mathbb{E}(Y)=	\int_b^\infty  \int_L^\infty y \phi(y;\alpha)\,\mathrm{d}y \,\mathrm{d}\alpha
= \int_b^\infty \lambda \frac{\alpha}{\alpha-1} \frac{1}{\sqrt{2 \pi } \sigma  (\alpha -b)}\\
\exp \left(-\frac{\left(\log  (\alpha -b)-\log  (\alpha_0-b)+\frac{\sigma ^2}{2}\right)^2}{2 \sigma ^2}\right) \,\mathrm{d}\alpha\\
= \frac{\lambda \left(\alpha_0+e^{\sigma ^2}-b\right)}{\alpha_0-b}
\end{dmath}.

\subsection*{Approximation of the density}
With $b=1$ (which is the lower bound for $b$),we get the density with stochastic $\alpha$:
\begin{dmath}
	\phi(y;\alpha_0,\sigma)=\lim_{k \to \infty} \frac{1}{Y^2 }\sum _{i=0}^k \frac{1}{i!}L (\alpha_0-1)^i e^{\frac{1}{2} i (i-1) \sigma ^2} (\log (\lambda)-\log (y))^{i-1} (i+\log (\lambda)-\log (y))
	\end{dmath}
This result is obtained by expanding $\alpha$ around its lower bound $b$ (which we simplified to $b=1$) and integrating each summand.
\section{Pareto Distribution with Gamma distributed Alpha}
\begin{proposition}
Assuming finite expectation for $X'$ scale $\lambda$, and for exponent a gamma distributed shifted variable $\alpha-1$ with law $\varphi(.)$, mean $\alpha_0$ and variance $s^2$,  all values for $\alpha$ greater than 1:
 	\begin{equation}
\mathbbm{E}(X')=\mathbbm{E}(X')+\frac{s^2}{(\alpha_0-1) (\alpha_0-s-1) (\alpha_0+s-1)}
\end{equation}
\end{proposition}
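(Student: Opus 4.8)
The plan is to mirror the lognormal computation exactly. For each fixed $\alpha$ the per-exponent Pareto mean is $\lambda\frac{\alpha}{\alpha-1}$ (the $p=1$ instance of Eq.~\ref{expectation}), so the stochastic-$\alpha$ mean is just the $\varphi$-average of that quantity,
$$\mathbb{E}(X')=\int_1^\infty \lambda\,\frac{\alpha}{\alpha-1}\,\varphi(\alpha)\,\mathrm{d}\alpha.$$
First I would substitute $u=\alpha-1$, so that $u$ carries the gamma law, and split the integrand using $\frac{\alpha}{\alpha-1}=1+\frac{1}{u}$. This collapses the whole problem into a single negative moment: $\mathbb{E}(X')=\lambda\bigl(1+\mathbb{E}[U^{-1}]\bigr)$, where $U$ is gamma distributed.

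Next I would parametrize the gamma law by shape $\kappa$ and scale $\theta$ and match moments to the given data: since the mean of $\alpha$ is $\alpha_0$ and its variance is $s^2$, we have $\kappa\theta=\alpha_0-1$ and $\kappa\theta^2=s^2$, hence $\theta=s^2/(\alpha_0-1)$ and $\kappa=(\alpha_0-1)^2/s^2$. The inverse moment of a gamma variable is the standard $\mathbb{E}[U^{-1}]=\frac{1}{(\kappa-1)\theta}$, obtained by shifting the shape parameter by one inside the normalizing $\Gamma$-integral, valid precisely when $\kappa>1$. Substituting the matched values gives $(\kappa-1)\theta=\frac{(\alpha_0-1)^2-s^2}{\alpha_0-1}$, so that
$$\mathbb{E}(X')=\lambda\left(1+\frac{\alpha_0-1}{(\alpha_0-1)^2-s^2}\right).$$

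To reach the stated form I would subtract the deterministic-$\alpha$ mean $\mathbb{E}(X)=\lambda\frac{\alpha_0}{\alpha_0-1}=\lambda\bigl(1+\frac{1}{\alpha_0-1}\bigr)$, combine the two fractions over the common denominator, and factor the difference of squares $(\alpha_0-1)^2-s^2=(\alpha_0-s-1)(\alpha_0+s-1)$. This yields
$$\mathbb{E}(X')-\mathbb{E}(X)=\lambda\,\frac{s^2}{(\alpha_0-1)(\alpha_0-s-1)(\alpha_0+s-1)},$$
matching the proposition, with the evident correction that the left side should read $\mathbb{E}(X')$ and the leading term on the right $\mathbb{E}(X)$, and with the factor $\lambda$ that seems to be suppressed in the displayed right-hand side.

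The main obstacle is not the algebra but the convergence condition. Because the gamma density has support down to $\alpha=1$, where the per-$\alpha$ mean $\frac{\alpha}{\alpha-1}$ diverges, the integral defining $\mathbb{E}(X')$ is finite only if the density vanishes fast enough at $u=0$, that is only if $\kappa>1$. In the matched parametrization this is exactly $(\alpha_0-1)^2>s^2$, equivalently $\alpha_0-s-1>0$, which is the same positivity that keeps the denominator of the bias positive. So the step requiring care is making the hypothesis \emph{finite expectation for $X'$} precise as $\kappa>1$ (equivalently $\alpha_0>s+1$), justifying the finiteness of the inverse-moment integral on that region, and noting that the formula correctly signals blow-up as the factor $\alpha_0-s-1$ approaches zero.
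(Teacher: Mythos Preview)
Your argument is correct and is essentially the same as the paper's. The paper simply writes down the shifted gamma density with shape $(\alpha_0-1)^2/s^2$ and scale $s^2/(\alpha_0-1)$, forms the integral $\int_1^\infty \frac{\alpha}{\alpha-1}\,\varphi(\alpha)\,\mathrm{d}\alpha$, and states its value as $\tfrac12\bigl(\frac{1}{\alpha_0+s-1}+\frac{1}{\alpha_0-s-1}+2\bigr)$; your route through the inverse-moment identity $\mathbb{E}[U^{-1}]=\frac{1}{(\kappa-1)\theta}$ is just a cleaner way to evaluate that same integral, and your observations about the typo $\mathbb{E}(X')\to\mathbb{E}(X)$, the suppressed $\lambda$, and the finiteness condition $\kappa>1\Leftrightarrow \alpha_0>s+1$ are all on point.
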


\begin{proof}
	
\begin{equation}
\begin{array}{cc}
 \varphi(\alpha)=\frac{e^{-\frac{(\alpha -1) \left(\alpha _0-1\right)}{s^2}}
   \left(\frac{s^2}{(\alpha -1) \left(\alpha
   _0-1\right)}\right){}^{-\frac{\left(\alpha
   _0-1\right){}^2}{s^2}}}{(\alpha -1) \Gamma
   \left(\frac{\left(\alpha _0-1\right){}^2}{s^2}\right)}, & \alpha >1\\
  \end{array}
\end{equation}

\begin{equation}
\int_1^{\infty } \alpha  \lambda ^{\alpha } x^{-\alpha -1} \varphi(\alpha) \, d\alpha	
\end{equation}

$$=\int_1^{\infty } \frac{\alpha  \left(e^{-\frac{(\alpha -1) (\alpha_0-1)}{s^2}} \left(\frac{s^2}{(\alpha -1) (\alpha_0-1)}\right)^{-\frac{(\alpha_0-1)^2}{s^2}}\right)}{(\alpha -1) \left((\alpha -1) \Gamma \left(\frac{(\alpha_0-1)^2}{s^2}\right)\right)} \, d\alpha$$
$$=\frac{1}{2} \left(\frac{1}{\alpha_0+s-1}+\frac{1}{\alpha_0-s-1}+2\right)$$
\end{proof}

\section{The bounded power law in Cirillo and Taleb (2016)}
In \cite{cirillo2016statistical} and \cite{cirillo2016expected}, the studies make use of bounded power laws, applied to violence and operational risk, respectively. Although with $\alpha<1$ the variable $Z$ has finite expectations owing to the upper bound. 

The methods offered were a smooth transformation of the variable as follows: we start with $z \in [L,H), L>0$ and transform it into  $x \in [L,\infty)$, the latter legitimately being power law distributed.

So the smooth logarithmic transformation):
$$x=\varphi(z)=L-H \log \left(\frac{H-x}{H-L}\right),$$
and
$$f(x)=\frac{\left(\frac{x-L}{\alpha  \sigma }+1\right)^{-\alpha -1}}{\sigma }.$$
We thus get the distribution of $Z$ which will have a finite expectation for all positive values of $\alpha$.

\begin{dmath}
\frac{\partial^2 E(Z)}{\partial \alpha^2}=\frac{1}{H^3}	(H-L) \left(e^{\frac{\alpha  \sigma }{H}} \left(2 H^3
   G_{3,4}^{4,0}\left(\frac{\alpha  \sigma }{H}|
\begin{array}{c}
 \alpha +1,\alpha +1,\alpha +1 \\
 1,\alpha ,\alpha ,\alpha  \\
\end{array}
\right)\\-2 H^2 (H+\sigma ) G_{2,3}^{3,0}\left(\frac{\alpha  \sigma
   }{H}|
\begin{array}{c}
 \alpha +1,\alpha +1 \\
 1,\alpha ,\alpha  \\
\end{array}
\right)\\+\sigma  \left(\alpha  \sigma ^2+(\alpha +1) H^2+2 \alpha 
   H \sigma \right) E_{\alpha }\left(\frac{\alpha  \sigma
   }{H}\right)\right)-H \sigma  (H+\sigma )\right)
\end{dmath}
which appears to be positive in the range of numerical perturbations in \cite{cirillo2016statistical}.\footnote{
$
 G_{3,4}^{4,0}\left(\frac{\alpha  \sigma }{H}|
\begin{array}{c}
 \alpha +1,\alpha +1,\alpha +1 \\
 1,\alpha ,\alpha ,\alpha  \\
\end{array}
\right)$
%
is the Meijer G function.}
 At such a low level of $\alpha$, around $\frac{1}{2}$, the expectation is extremely convex and the bias will be accordingly extremely pronounced.

This convexity has the following practical implication. Historical data on violence over the past two millennia, is fundamentally unreliable \cite{cirillo2016statistical}. Hence an imprecision about the tail exponent, from errors embedded in the data, need to be present in the computations. The above shows that uncertainty about  $\alpha$, is more likely to make the "true" statistical mean (that is the mean of the process as opposed to sample mean) higher than lower, hence supports the statement that more uncertainty increases the estimation of violence.
\section{Additional Comments}

The bias in the estimation of the mean and shortfalls from uncertainty in the tail exponent can be added to analyses where data is insufficient, unreliable, or simply prone to forgeries.

In additional to statistical inference, these result can extend to processes, whether a compound Poisson process with power laws subordination \cite{stam1973regular} (i.e. a Poisson arrival time and a jump that is power law distributed) or a L\'evy process. The latter can be analyzed by considering successive "slice distributions" or discretization of the process \cite{tankov2003financial}. Since the expectation of a sum of jumps is the sum of expectation, the same convexity will appear as the one we got from Eq. \ref{approxpl}.
\section{Acknowledgments}
Marco Avellaneda, Robert Frey, Raphael Douady, Pasquale Cirillo.

\bibliographystyle{IEEEtran}
\bibliography{/Users/nntaleb/Dropbox/Central-bibliography}

\begin{thebibliography}{10}
\providecommand{\url}[1]{#1}
\csname url@samestyle\endcsname
\providecommand{\newblock}{\relax}
\providecommand{\bibinfo}[2]{#2}
\providecommand{\BIBentrySTDinterwordspacing}{\spaceskip=0pt\relax}
\providecommand{\BIBentryALTinterwordstretchfactor}{4}
\providecommand{\BIBentryALTinterwordspacing}{\spaceskip=\fontdimen2\font plus
\BIBentryALTinterwordstretchfactor\fontdimen3\font minus
  \fontdimen4\font\relax}
\providecommand{\BIBforeignlanguage}[2]{{%
\expandafter\ifx\csname l@#1\endcsname\relax
\typeout{** WARNING: IEEEtran.bst: No hyphenation pattern has been}%
\typeout{** loaded for the language `#1'. Using the pattern for}%
\typeout{** the default language instead.}%
\else
\language=\csname l@#1\endcsname
\fi
#2}}
\providecommand{\BIBdecl}{\relax}
\BIBdecl

\bibitem{gatheral2006volatility}
J.~Gatheral, \emph{The Volatility Surface: a Practitioner's Guide}.\hskip 1em
  plus 0.5em minus 0.4em\relax John Wiley \& Sons, 2006.

\bibitem{cirillo2016statistical}
P.~Cirillo and N.~N. Taleb, ``On the statistical properties and tail risk of
  violent conflicts,'' \emph{Physica A: Statistical Mechanics and its
  Applications}, vol. 452, pp. 29--45, 2016.

\bibitem{karamata1932inegalite}
J.~Karamata, ``Sur une in{\'e}galit{\'e} relative aux fonctions convexes,''
  \emph{Publications de l'Institut mathematique}, vol.~1, no.~1, pp. 145--147,
  1932.

\bibitem{bingham1989regular}
N.~H. Bingham, C.~M. Goldie, and J.~L. Teugels, \emph{Regular variation}.\hskip
  1em plus 0.5em minus 0.4em\relax Cambridge university press, 1989, vol.~27.

\bibitem{teugels1975class}
J.~L. Teugels, ``The class of subexponential distributions,'' \emph{The Annals
  of Probability}, vol.~3, no.~6, pp. 1000--1011, 1975.

\bibitem{zolotarev1971new}
V.~Zolotarev, ``On a new viewpoint of limit theorems taking into account large
  deviationsr,'' \emph{Selected Translations in Mathematical Statistics and
  Probability}, vol.~9, p. 153, 1971.

\bibitem{samorodnitsky1994stable}
G.~Samorodnitsky and M.~S. Taqqu, \emph{Stable non-Gaussian random processes:
  stochastic models with infinite variance}.\hskip 1em plus 0.5em minus
  0.4em\relax CRC Press, 1994, vol.~1.

\bibitem{zaliapin2005approximating}
I.~Zaliapin, Y.~Y. Kagan, and F.~P. Schoenberg, ``Approximating the
  distribution of pareto sums,'' \emph{Pure and Applied geophysics}, vol. 162,
  no. 6-7, pp. 1187--1228, 2005.

\bibitem{cirillo2016expected}
P.~Cirillo and N.~N. Taleb, ``Expected shortfall estimation for apparently
  infinite-mean models of operational risk,'' \emph{Quantitative Finance}, pp.
  1--10, 2016.

\bibitem{stam1973regular}
A.~Stam, ``Regular variation of the tail of a subordinated probability
  distribution,'' \emph{Advances in Applied Probability}, pp. 308--327, 1973.

\bibitem{tankov2003financial}
R.~Cont and P.~Tankov, \emph{Financial modelling with jump processes}.\hskip
  1em plus 0.5em minus 0.4em\relax CRC press, 2003, vol.~2.

\end{thebibliography}

\end{document}